\newtheorem{theorem}{Theorem}
\newtheorem{proposition}[theorem]{Proposition}
\newenvironment{proof}[1][Proof]{\noindent\textbf{#1.} }{\ \rule{0.5em}{0.5em}}
\begin{document}

\title{Calibrated Forecasts: The Minimax Proof\thanks{%
Dedicated with great admiration to David Gale, in commemoration of his 100th
birthday. This version: January 2023; previous versions: August 2018;
October 2021. The author thanks J\'{e}r\^{o}me Renault for asking about the
relation between the calibration error and the number of periods, and Benjy
Weiss for providing inequality (\ref{eq:l2}).}}
\author{Sergiu Hart\thanks{%
Einstein Institute of Mathematics, Bogen Department of Economics, and
Federmann Center for the Study of Rationality, The Hebrew University of
Jerusalem. \newline
\emph{e-mail}: \texttt{hart@huji.ac.il\ } \emph{web page}: \texttt{%
http://www.ma.huji.ac.il/hart}}}
\maketitle

\begin{abstract}
We provide a formal write-up of the simple proof (1995) of the existence of
calibrated forecasts by the minimax theorem, which moreover shows that $%
N^{3} $ periods suffice to guarantee a calibration error of at most $1/N.$
\end{abstract}

\def\@biblabel#1{#1\hfill}
\def\thebibliography#1{\section*{References}
\addcontentsline{toc}{section}{References}
\list
{}{
\labelwidth 0pt
\leftmargin 1.8em
\itemindent -1.8em
\usecounter{enumi}}
\def\newblock{\hskip .11em plus .33em minus .07em}
\sloppy\clubpenalty4000\widowpenalty4000
\sfcode`\.=1000\relax\def\baselinestretch{1}\large \normalsize}
\let\endthebibliography=\endlist%

Consider a weather forecaster who announces each day a probability $p$ that
there will be rain tomorrow. The forecaster is said to be \emph{calibrated}
if, for each forecast $p$ that is used, the relative frequency of rainy days
out of those days in which the forecast was $p$ is equal to $p$ in the long
run.

The surprising result of Foster and Vohra (1998) is that calibration can be 
\emph{guaranteed}, no matter what the weather will be. There are various
proofs of this result, and there is a large literature on calibration and
its uses; see the survey of Olszewski (2015) and the more recent paper of
Foster and Hart (2021).

A simple proof of the existence of calibrated forecasts, based on the \emph{%
minimax theorem}, was provided by the author in 1995.\footnote{%
At a lecture given by Dean Foster at the Center for Rationality of the
Hebrew University of Jerusalem; see Section 4, \textquotedblleft An argument
of Sergiu Hart," in Foster and Vohra (1998).} The basic argument is as
follows (see below for details). If the forecaster knew the strategy of the
\textquotedblleft rainmaker" (which could well be a mixed, i.e.,
probabilistic, strategy), then the forecaster could clearly get calibrated
forecasts by announcing every period the corresponding known probability of
rain. Incorporating this into a finite game (by using a finite grid and a
finite horizon) yields, by von Neumann's (1928) minimax theorem for
two-person zero-sum finite games, the existence of a strategy of the
forecaster that guarantees calibration against \emph{any} strategy of the
rainmaker. This is a striking use of the minimax theorem, since the fact
that there is a calibrated \emph{reply} to any given strategy of the
rainmaker is clear, whereas the consequence that there is a \emph{single}
strategy that is calibrated against \emph{all} strategies of the rainmaker
comes as a big surprise.\footnote{%
Indeed, Foster and Vohra had a hard time getting their paper published: they
got many desk rejections saying that the result \textquotedblleft cannot be
true" (the technical report came out in 1991, and the published paper only
seven years later).}

More formally, consider a two-person finite game where player 1 has $m$
strategies, player 2 has $n$ strategies, and $u_{ij}$ is the payoff when
player 1 plays his $i$-th strategy and player 2 plays his $j$-th strategy.%
\footnote{%
It does not matter who gets this \textquotedblleft payoff" (it could be,
say, player 1's payoff); also, the game need not be a zero-sum game, as only
one payoff function is considered.} A mixed strategy $x$ of player 1 is a
probability distribution over his pure strategies $\{1,...,m\}$, i.e., $%
x=(x_{1},...,x_{m})$, where $x_{i}\geq 0$ for every $i=1,...,m$ and $%
\sum_{i=1}^{m}x_{i}=1$; similarly, a mixed strategy $y$ of player 2 is a
probability distributions over his pure strategies $\{1,...,n\}$, i.e., $%
y=(y_{1},...,y_{n}),$ where $y_{j}\geq 0$ for every $j=1,...,n$ and $%
\sum_{j=1}^{n}y_{j}=1$. When the two players play the mixed strategies $x$
and $y$, respectively, the (expected) payoff is $U(x,y):=\sum_{i=1}^{m}%
\sum_{j=1}^{n}x_{i}y_{j}u_{ij}.$ 

We now provide a formulation of the minimax theorem, in a useful yet
non-standard way.

\begin{theorem}[Minimax]
\label{th:mm}Assume\textbf{\ }that the real number $v$ satisfies the
following:

\textbf{(i)} for every mixed strategy $y$ of player 2 there is a mixed
strategy

$x\equiv x(y)$ of player 1 such that the payoff is at least $v$ (i.e., $%
U(x(y),y)\geq v$).

\noindent Then

\textbf{(ii)} there is a mixed strategy $x^{\ast }$ of player 1 that
guarantees that the payoff

is at least $v$ (i.e., $U(x^{\ast },y)\geq v$ for every mixed strategy $y$
of player 2).
\end{theorem}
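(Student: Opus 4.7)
The plan is to argue by contradiction: assume that (ii) fails and deduce that (i) must fail as well. Since $U(x,y)$ is linear in $y$, the failure of (ii) means that for every mixed strategy $x$ of player 1 there is some pure strategy $j$ of player 2 with $U(x,j)<v$. Translating this geometrically, I would introduce the set of payoff vectors achievable by player 1,
\[
C := \{(U(x,1),\ldots,U(x,n)) : x \text{ a mixed strategy of player 1}\} \subseteq \mathbb{R}^{n},
\]
which is the convex hull of the $m$ pure-strategy row-vectors $(u_{i1},\ldots,u_{in})$, hence convex and compact. The failure of (ii) says exactly that $C$ is disjoint from the closed convex ``upper orthant'' $V := \{z \in \mathbb{R}^{n} : z_{j} \geq v \text{ for all } j\}$.

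I would then apply the separating hyperplane theorem to the disjoint pair consisting of the compact convex $C$ and the closed convex $V$, obtaining a nonzero vector $y \in \mathbb{R}^{n}$ and scalars $c_{1} < c_{2}$ with $y \cdot z \leq c_{1}$ on $C$ and $y \cdot z \geq c_{2}$ on $V$. Because $V$ contains the ray $v\mathbf{1} + t\, e_{j}$ for every $j$ and every $t \geq 0$, the separator must satisfy $y_{j}\geq 0$ for all $j$; after rescaling so that $\sum_{j}y_{j}=1$, this $y$ is a legitimate mixed strategy of player 2. Moreover $\min_{z \in V}\, y \cdot z = v$, attained at $z = v\mathbf{1}$, so $c_{2} \leq v$; and for every mixed $x$ one has $U(x,y) = y \cdot (U(x,1),\ldots,U(x,n)) \leq c_{1} < v$, contradicting (i) for this particular $y$.

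The main obstacle is the separation step itself, which must yield three things simultaneously: (a) \emph{strict} separation, which is what gives the final strict inequality and rules out a boundary case; this is available because $C$ is compact; (b) \emph{nonnegative} coefficients $y_{j}$, forced by the unboundedness of $V$ in each positive coordinate direction; and (c) the sharp bound $c_{2}\leq v$, obtained by evaluating $y\cdot z$ at the corner point $v\mathbf{1}$ of $V$. An alternative route would be to invoke LP duality or a direct induction on the dimension of the simplex, but the hyperplane-separation argument is the shortest and most geometric, and matches the spirit of the ``non-standard'' formulation stated here.
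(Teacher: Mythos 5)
Your proof is correct, but it takes a genuinely different route from the paper. The paper proves the theorem in one line by \emph{reducing} it to von Neumann's (1928) minimax theorem: the premise (i) is rephrased as $\min_{y}\max_{x}U(x,y)\geq v$, the minimax equality $\max_{x}\min_{y}U(x,y)=\min_{y}\max_{x}U(x,y)$ is invoked as a black box, and a maximizer $x^{\ast}$ of $\min_{y}U(x,y)$ then delivers (ii). You instead give a \emph{self-contained} geometric proof from scratch, separating the compact convex payoff-vector set $C$ from the closed convex upper orthant $V$ and reading off the separating normal $y$ (after normalization) as the offending mixed strategy of player 2 that would violate (i). Your argument is essentially a direct proof of the minimax theorem itself, specialized to the one-directional form stated here, and it is sound: compactness of $C$ and closedness of $V$ justify \emph{strict} separation; the unbounded rays $v\mathbf{1}+t e_{j}\subseteq V$ force $y\geq 0$; and evaluating at the corner $v\mathbf{1}$ pins down $c_{2}\leq v$, giving $U(x,y)\leq c_{1}<c_{2}\leq v$ uniformly in $x$. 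What your approach buys is independence from the minimax theorem as a cited result; what the paper's approach buys is brevity and transparency about \emph{why} this ``for-every-there-exists implies there-exists-for-every'' statement is just the classical minimax theorem in disguise, which is precisely the expository point the paper wants to make.
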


Indeed, the premise (i) says that $\max_{x}U(x,y)\geq v$ for every $y$,
i.e., \linebreak $\min_{y}\max_{x}U(x,y)\geq v$; since $\max_{x}%
\min_{y}U(x,y)=\min_{y}\max_{x}U(x,y)$ by von Neumann's (1928) minimax
theorem, we get $\max_{x}\min_{y}U(x,y)\geq v$, and so, taking $x^{\ast }$
to be a maximizer there, $\min_{y}U(x^{\ast },y)\geq v$, which is the
conclusion (ii).\footnote{%
Since (ii) trivially implies (i), the two conditions (i) and (ii) are in
fact equivalent. Also, the premise (i) is easily seen to be equivalent to
\textquotedblleft for every mixed strategy $y$ of player 2 there is a \emph{%
pure} strategy $i\equiv i(y)$ of player 1 such that $U(i(y),y)\geq v$."}

Stated this way, the minimax theorem may look surprising, since from a
premise of \textquotedblleft for every $y$ there is an $x$" it gets a
conclusion of \textquotedblleft there is an $x$ such that for every $y$," a
false logical argument in general (while \textquotedblleft every child has a
mother" is true, \textquotedblleft there is a mother of all children" is
not). Nevertheless, the result is correct (and far from trivial) under the
assumptions that, first, there are finitely many pure strategies, and
second, one uses mixed strategies (the result is easily seen to be false if
either one of these assumptions fails\footnote{%
Consider the \textquotedblleft choose the higher integer" infinite game, and
the \textquotedblleft matching pennies" game with pure strategies only.}).

In the above calibration setup, the premise (i) is that for every strategy
of the rainmaker there is a strategy of the forecaster that yields a small
calibration score,\footnote{%
The \textquotedblleft calibration score" will be formally defined below, as
the average distance between forecasts and relative frequencies (and so
being calibrated means that the calibration score is equal to zero).} and
the conclusion (ii) is that there is a strategy of the forecaster that
yields a small calibration score for every strategy of the rainmaker (apply
the minimax theorem, taking as payoff the negative of the calibration
score). Let us show how to get a calibration score of, say, $10\%$. To see
that the premise (i) holds, assume that the strategy of the rainmaker is
given. We will round each forecast to a multiple of $10\%$ (the finite grid
of forecasts is thus $0\%,$ $10\%,$ ..., $100\%$); therefore, the forecast
of, say, $70\%$, is announced when the probability of rain is between $65\%$
and $75\%$. Assume that this has occurred on a large number of days so that
the law of large numbers (i.e., Chebyshev's inequality) yields an expected
error between expectation and realization of at most $5\%$; the relative
frequency of rain out of these days will then be between $65\%-5\%=60\%$ and 
$75\%+5\%=80\%$---i.e., with a calibration error of at most $10\%$. Since
the same holds for every forecast (that is used nonnegligibly often), taking
an appropriately large horizon proves the premise (i)---and thus the
conclusion (ii).

\bigskip

We now provide a formal write-up of this proof, which moreover shows that an
expected calibration error of size $\varepsilon $ is guaranteed after $%
1/\varepsilon ^{3}$ periods.

For each period (day) $t=1,2,...,$ let $a_{t}\in \{0,1\}$ be the \emph{%
weather}, with $1$ for rain and $0$ for no rain, and let $c_{t}\in \lbrack
0,1]$ be the \emph{forecast}. For convenience, we will let our forecasts lie
in the grid $D:=\{1/(2N),3/(2N),...,(2N-1)/(2N)\}$ for some positive integer 
$N$; thus, each point in $[0,1]$ is within a distance of at most $1/(2N)$
from a point in $D$ (see Remark (e) below for the standard $1/N$-grid).

The \emph{calibration score }$K_{T}$ at time $T$ is computed as follows. For
each $d\in D$ let\footnote{%
We write $\mathbf{1}_{X}$ for the indicator of the event $X;$ thus, $\mathbf{%
1}_{c_{t}=d}$ equals $1$ if $c_{t}=d$ and $0$ otherwise.} 
\[
n(d)\equiv n_{T}(d)\,%
{\;:=\;}%
\,\sum_{t=1}^{T}\mathbf{1}_{c_{t}=d} 
\]
be the number of periods in which the forecast was $d,$ and let 
\[
\overline{a}(d)\equiv \overline{a}_{T}(d)\,%
{\;:=\;}%
\,\frac{1}{n(d)}\sum_{t=1}^{T}\mathbf{1}_{c_{t}=d}\,a_{t} 
\]
be the (relative) frequency of rain in those $n(d)$ periods; the calibration
score $K_{T}$ is then the average distance between forecasts and rain
frequencies, namely,\footnote{%
An alternative score averages the squared errors: $\mathcal{K}%
_{T}:=\sum_{d\in D}(n(d)/T)(\overline{a}(d)-d)^{2}.$ The two scores are
essentially equivalent, because $(K_{T})^{2}\leq \mathcal{K}_{T}\leq K_{T}$
(the first inequality is by Jensen's inequality, and the second is by $%
\left\vert \overline{a}(d)-d\right\vert \leq 1$, since $\overline{a}(d)$ and 
$d$ are both in $[0,1])$.}%
\[
K_{T}\,%
{\;:=\;}%
\,\sum_{d\in D}\left( \frac{n(d)}{T}\right) \left\vert \overline{a}%
(d)-d\right\vert . 
\]

This setup can be viewed as a finite $T$-period game in which in every
period $t=1,...,T$ the rainmaker chooses the weather $a_{t}\in \{0,1\}$ and
the forecaster chooses the forecast $c_{t}\in D,$ and the payoff is the
calibration score $K_{T}.$ Both players are assumed to have perfect recall
of past weather and forecasts (thus allowing for an \textquotedblleft
adversarial" rainmaker); since the number of periods $T$ and the sets of
choices of the players, $\{0,1\}$ and $D,$ are all finite, the game is a
finite game (i.e., each player has finitely many pure strategies).

\begin{theorem}[Calibration]
\label{th}Let $T\geq N^{3}.$ Then there exists a mixed strategy of the
forecaster that guarantees that\footnote{%
The expectation is over the random choices of the two players.} $\mathbb{E}%
\left[ K_{T}\right] \leq 1/N$ against any mixed strategy of the rainmaker.
\end{theorem}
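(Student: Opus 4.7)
The plan is to apply the Minimax Theorem (Theorem~\ref{th:mm}) with payoff $-K_T$ and $v = -1/N$: this reduces the task to verifying premise (i), namely that for every mixed strategy $y$ of the rainmaker there is a mixed strategy of the forecaster with $\mathbb{E}[K_T] \leq 1/N$. Once (i) is established, Theorem~\ref{th:mm} automatically delivers the single universal strategy claimed in the theorem. So the entire content of the proof is in constructing a good \emph{response} to an arbitrary known rainmaker strategy.

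To build that response, fix $y$, and on each day $t$ let $p_t := \mathbb{E}[a_t \mid \mathcal{F}_{t-1}]$ denote the conditional probability of rain given the history $\mathcal{F}_{t-1}$ of past weather and forecasts; since $y$ is fixed, $p_t$ is computable by the forecaster. The forecaster's response is simply to announce the grid point $c_t \in D$ nearest to $p_t$, so that $|c_t - p_t| \leq 1/(2N)$ almost surely. This is the formal counterpart of the informal description in the introduction.

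To bound $\mathbb{E}[K_T]$, I would decompose the contribution of each $d \in D$ into a stochastic part and a deterministic rounding part:
\[
n(d)\,|\overline{a}(d) - d| \;\leq\; |S(d)| \;+\; n(d)/(2N), \qquad S(d) := \sum_{t=1}^{T} \mathbf{1}_{c_t = d}\,(a_t - p_t).
\]
The second term captures the rounding error, since every $p_t$ contributing to $\overline{a}(d)$ lies within $1/(2N)$ of $d$. The first term is controlled by observing that $\{\mathbf{1}_{c_t = d}(a_t - p_t)\}_t$ is a martingale-difference sequence (both $c_t$ and $p_t$ are $\mathcal{F}_{t-1}$-measurable, and $\mathbb{E}[a_t - p_t \mid \mathcal{F}_{t-1}] = 0$); orthogonality together with the variance bound $p_t(1-p_t) \leq 1/4$ gives $\mathbb{E}[S(d)^2] \leq \mathbb{E}[n(d)]/4$, hence $\mathbb{E}|S(d)| \leq \tfrac{1}{2}\sqrt{\mathbb{E}[n(d)]}$ by Jensen. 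Summing over the $|D| = N$ grid points and applying Cauchy--Schwarz via $\sum_{d \in D} \sqrt{\mathbb{E}[n(d)]} \leq \sqrt{N \cdot T}$, then dividing by $T$, yields
\[
\mathbb{E}[K_T] \;\leq\; \tfrac{1}{2}\sqrt{N/T} + 1/(2N),
\]
which is at most $1/N$ as soon as $T \geq N^3$. This verifies premise (i).

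The main obstacle is not any individual step but the interplay that produces the cubic rate: the naive $\sqrt{T}$ law-of-large-numbers factor is inflated to $\sqrt{N T}$ by the Cauchy--Schwarz aggregation over $N$ grid points, and balancing this stochastic error against the $1/(2N)$ grid error forces the threshold $T \sim N^3$. A secondary care point is the random denominator $n(d)$, which is handled cleanly by working throughout with $n(d)\,|\overline{a}(d) - d|$ and only ever dividing by the deterministic horizon $T$.
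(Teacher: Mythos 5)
Your proposal is correct and follows essentially the same route as the paper: reduce to the best-response problem via the minimax theorem, round the conditional rain probability $p_t$ to the grid, split the per-forecast error into the rounding part $n(d)/(2N)$ and the martingale-difference sum $S(d)$ (the paper's $\widetilde{G}(d)$), bound $\mathbb{E}[S(d)^2]\leq\mathbb{E}[n(d)]/4$ by orthogonality, and aggregate with Jensen plus Cauchy--Schwarz to get $\tfrac{1}{2}\sqrt{N/T}+1/(2N)\leq 1/N$ for $T\geq N^3$. The only cosmetic omission is a fixed tie-breaking rule when $p_t$ is equidistant between two grid points, which the paper mentions to make $c_t$ a well-defined $h_{t-1}$-measurable function.
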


This follows from the proposition below by applying the minimax theorem to
the payoff function $-K_{T}$.

\begin{proposition}
\label{p:minmax}Let $T\geq N^{3}.$ Then for every mixed strategy of the
rainmaker there is a strategy of the forecaster such that $\mathbb{E}\left[
K_{T}\right] \leq 1/N$.
\end{proposition}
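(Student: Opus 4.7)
The plan is to exhibit, given a mixed strategy of the rainmaker, an explicit (pure) strategy of the forecaster that achieves $\mathbb{E}[K_T]\le 1/N$. Since the rainmaker's strategy is fixed and known, at each period $t$ the forecaster can compute the conditional probability of rain $p_t:=\Pr[a_t=1\mid \mathcal{F}_{t-1}]$, where $\mathcal{F}_{t-1}$ is the $\sigma$-algebra generated by $a_1,\ldots,a_{t-1}$ and $c_1,\ldots,c_{t-1}$. I let the forecaster announce $c_t\in D$ to be the grid point nearest to $p_t$; then $|p_t-c_t|\le 1/(2N)$, and $c_t$ is $\mathcal{F}_{t-1}$-measurable.

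For each $d\in D$ I would decompose
\[
\sum_{t=1}^{T}\mathbf{1}_{c_t=d}(a_t-d) \;=\; M_T(d)+R_T(d),
\]
where $M_T(d):=\sum_{t=1}^{T}\mathbf{1}_{c_t=d}(a_t-p_t)$ and $R_T(d):=\sum_{t=1}^{T}\mathbf{1}_{c_t=d}(p_t-d)$. The rounding term is bounded deterministically by $|R_T(d)|\le n(d)/(2N)$. The stochastic term $M_T(d)$ is a sum of martingale differences with respect to $(\mathcal{F}_{t})_{t}$, because $c_t$ is $\mathcal{F}_{t-1}$-measurable and $\mathbb{E}[a_t-p_t\mid\mathcal{F}_{t-1}]=0$; moreover each summand has conditional variance at most $\mathbf{1}_{c_t=d}/4$. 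Orthogonality of martingale differences gives $\mathbb{E}[M_T(d)^{2}]\le \mathbb{E}[n(d)]/4$, and Jensen's inequality then yields $\mathbb{E}|M_T(d)|\le \tfrac{1}{2}\sqrt{\mathbb{E}[n(d)]}$. Combining and summing over $d\in D$ (of which there are $|D|=N$), together with $\sum_{d\in D}n(d)=T$ and the Cauchy--Schwarz bound $\sum_{d}\sqrt{\mathbb{E}[n(d)]}\le \sqrt{N\cdot T}$, gives
\[
T\cdot\mathbb{E}[K_T] \;=\; \sum_{d\in D}\mathbb{E}\bigl|M_T(d)+R_T(d)\bigr| \;\le\; \tfrac{1}{2}\sqrt{NT}+\tfrac{T}{2N}.
\]
Dividing by $T$ and invoking $T\ge N^{3}$ bounds each of the two terms by $1/(2N)$, for a total of $1/N$.

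The main obstacle is the martingale / law-of-large-numbers estimate, because the sample size $n(d)$ in each cell $d$ is itself random and depends adaptively on the past. A naive Chebyshev argument pretending that the rain-days within cell $d$ are i.i.d.\ Bernoulli$(d)$ is not literally valid; the martingale viewpoint is what sidesteps this, since $\mathbf{1}_{c_t=d}(a_t-p_t)$ is a martingale difference regardless of how the forecaster adaptively chooses which day to assign to which cell. Everything else -- the $1/(2N)$ rounding estimate, the Cauchy--Schwarz aggregation over the $N$ grid points, and the balancing of the stochastic error $\tfrac{1}{2}\sqrt{N/T}$ against the rounding error $1/(2N)$ that produces the threshold $T=N^{3}$ -- is routine. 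This corresponds precisely to the informal argument (with $10\%$ grid, $5\%$ Chebyshev slack) given in the introduction, made rigorous via the martingale $L^{2}$-inequality (the ``inequality (\ref{eq:l2})'' acknowledged in the footnote).
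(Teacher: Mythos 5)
Your proposal is correct and follows essentially the same route as the paper: round $p_t$ to the grid, split $G(d)=\sum_t \mathbf{1}_{c_t=d}(a_t-d)$ into the martingale part $\widetilde G(d)=\sum_t\mathbf{1}_{c_t=d}(a_t-p_t)$ (your $M_T(d)$) and the deterministic rounding part (your $R_T(d)$), bound $\mathbb{E}[\widetilde G(d)^2]\le\tfrac14\mathbb{E}[n(d)]$ by martingale orthogonality, then aggregate over $D$ via Jensen and Cauchy--Schwarz and balance the two error terms at $T=N^3$. The only cosmetic difference is that the paper first bounds $|K_T-\widetilde K_T|\le 1/(2N)$ globally and then estimates $\mathbb{E}[\widetilde K_T]$, whereas you keep the two contributions together inside the sum over $d$; the estimates are identical.
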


\begin{proof}
Let $\tau $ be a mixed strategy of the rainmaker. For every $t\geq 1$ and
history $h_{t-1}=(a_{1},c_{1},...,a_{t-1},c_{t-1})\in (\{0,1\}\times
D)^{t-1} $ of rain and forecasts before time $t,$ let $p_{t}:=\mathbb{P}%
\left[ a_{t}=1|h_{t-1}\right] =\mathbb{E}\left[ a_{t}|h_{t-1}\right] $ be
the probability of rain induced by the rainmaker's strategy $\tau .$ We then
let the forecast $c_{t}$ after the history $h_{t-1}$ be the rounding of $%
p_{t}$ to the grid $D,$ with a fixed tie-breaking rule when $p_{t}$ is the
midpoint of two consecutive points in $D$; this makes $c_{t}$ a
deterministic function of the history---i.e., $c_{t}$ is $h_{t-1}$%
-measurable---and we always have $\left\vert c_{t}-p_{t}\right\vert \leq
1/(2N)$.

The calibration score $K_{T}$ can be expressed as%
\[
K_{T}=\frac{1}{T}\sum_{d\in D}\left\vert G(d)\right\vert , 
\]%
where\footnote{$G(d)$ is the difference between the actual number of rainy
days, $n(d)\overline{a}(d),$ and the predicted number of rainy days, $n(d)d,$
in the $n(d)$ days in which the forecast was $d;$ it is referred to as the
(total) \textquotedblleft gap" in Foster and Hart (2021).} 
\[
G(d)%
{\;:=\;}%
n(d)(\overline{a}(d)-d)=\sum_{t=1}^{T}\mathbf{1}_{c_{t}=d}(a_{t}-d)=%
\sum_{t=1}^{T}\mathbf{1}_{c_{t}=d}(a_{t}-c_{t}) 
\]%
for every $d\in D$. Replacing each $c_{t}$ with $p_{t}$ yields the scores 
\begin{eqnarray*}
\widetilde{G}(d) &%
{\;:=\;}%
&\sum_{t=1}^{T}\mathbf{1}_{c_{t}=d}(a_{t}-p_{t})\text{\ \ and} \\
\widetilde{K}_{T} &%
{\;:=\;}%
&\frac{1}{T}\sum_{d\in D}\left\vert \widetilde{G}(d)\right\vert ;
\end{eqnarray*}%
since $|c_{t}-p_{t}|\leq 1/(2N)$ it follows that $|G(d)-\widetilde{G}%
(d)|\leq n(d)/(2N)$ and 
\begin{equation}
\left\vert K_{T}-\widetilde{K}_{T}\right\vert \leq \frac{1}{T}\sum_{d\in D}%
\frac{n(d)}{2N}=\frac{1}{2N}  \label{eq:round}
\end{equation}%
(because $\sum_{d}n(d)=T$).

We claim that\footnote{%
If one does not care about the bound $N^{3}$ on $T$ one may use at this
point various simpler Chebyshev or law-of-large-numbers inequalities (see
also Remarks (c) and (d) below).}%
\begin{equation}
\mathbb{E}\left[ \widetilde{G}(d)^{2}\right] \leq \frac{1}{4}\mathbb{E}\left[
n(d)\right]  \label{eq:l2}
\end{equation}%
for each $d\in D$. Indeed, $\widetilde{G}(d)=\sum_{t=1}^{T}\mathbf{1}%
_{c_{t}=d}Z_{t}$ where $Z_{t}:=a_{t}-p_{t}$, for which we have $\mathbb{E}%
\left[ Z_{t}|h_{t-1}\right] =0$ (because $p_{t}=\mathbb{E}\left[
a_{t}|h_{t-1}\right] )$ and $\mathbb{E}\left[ Z_{t}^{2}|h_{t-1}\right] \leq
1/4$ (because this is the variance of a Bernoulli random variable, namely, $%
a_{t}|h_{t-1}$). Then, for $s<t$ we get 
\begin{eqnarray*}
\mathbb{E}\left[ \left( \mathbf{1}_{c_{s}=d}\,Z_{s}\right) \cdot \left( 
\mathbf{1}_{c_{t}=d}\,Z_{t}\right) \right] &=&\mathbb{E}\left[ \mathbb{E}%
\left[ (\mathbf{1}_{c_{s}=d}\,Z_{s})\cdot (\mathbf{1}_{c_{t}=d}%
\,Z_{t})|h_{t-1}\right] \right] \\
&=&\mathbb{E}\left[ \mathbf{1}_{c_{s}=d}\,Z_{s}\mathbf{1}_{c_{t}=d}\,\mathbb{%
E}\left[ Z_{t}|h_{t-1}\right] \right] =0
\end{eqnarray*}%
(because the random variables $c_{s},$ $Z_{s},$ and $c_{t}$ are $h_{t-1}$%
-measurable), and for $s=t$ we get%
\begin{eqnarray*}
\mathbb{E}\left[ \left( \mathbf{1}_{c_{t}=d}\,Z_{t}\right) ^{2}\right] &=&%
\mathbb{E}\left[ \mathbb{E}\left[ \left( \mathbf{1}_{c_{t}=d}\,Z_{t}\right)
^{2}|h_{t-1}\right] \right] \\
&=&\mathbb{E}\left[ \mathbf{1}_{c_{t}=d}\,\mathbb{E}\left[ Z_{t}^{2}|h_{t-1}%
\right] \right] \leq \frac{1}{4}\mathbb{E}\left[ \mathbf{1}_{c_{t}=d}\right]
;
\end{eqnarray*}%
summing all these terms yields $\mathbb{E}\left[ \widetilde{G}(d)^{2}\right]
\leq (1/4)\sum_{t=1}^{T}\mathbb{E}\left[ \mathbf{1}_{c_{t}=d}\right] =(1/4)%
\mathbb{E}\left[ n(d)\right] $, which is (\ref{eq:l2}).

Therefore,%
\begin{eqnarray}
\mathbb{E}\left[ \widetilde{K}_{T}\right] &=&\frac{1}{T}\sum_{d\in D}\mathbb{%
E}\left[ \left\vert \widetilde{G}(d)\right\vert \right] \leq \frac{1}{T}%
\frac{1}{2}\sum_{d\in D}\left( \mathbb{E}\left[ n(d)\right] \right) ^{1/2} 
\nonumber \\
&\leq &\frac{1}{T}\frac{1}{2}\left( N\sum_{d\in D}\mathbb{E}\left[ n(d)%
\right] \right) ^{1/2}=\frac{1}{2}\left( \frac{N}{T}\right) ^{1/2},
\label{eq:N}
\end{eqnarray}%
where we have used $\mathbb{E}\left[ \left\vert \widetilde{G}(d)\right\vert %
\right] \leq \left( \mathbb{E}\left[ \widetilde{G}(d)^{2}\right] \right)
^{1/2}$ and (\ref{eq:l2}) for the first inequality, the Cauchy--Schwartz
inequality for the second one, and finally $\sum_{d}\mathbb{E}\left[ n(d)%
\right] =T$. When $T\geq N^{3}$ this gives $\mathbb{E}\left[ \widetilde{K}%
_{T}\right] \leq 1/(2N)$, and hence $\mathbb{E}\left[ K_{T}\right] \leq
1/(2N)+1/(2N)=1/N$ by (\ref{eq:round}).
\end{proof}

\bigskip

\noindent \textbf{Remarks. }\emph{(a) }Since the game between the rainmaker
and the forecaster is a game of perfect recall, by Kuhn's (1953)\ theorem
one can replace mixed strategies with their equivalent \emph{behavior}
strategies. A behavior strategy of the forecaster, which is referred to as a 
\emph{forecasting procedure}, consists of a separate randomization after
each history; i.e., it is a mapping from the set of histories to the set of
probability distributions on $D.$

\emph{(b) }$N^{3}$ is the right order of magnitude for the horizon $T$ that
guarantees a calibration error of $1/N$ when the forecaster rounds the rain
probabilities $p_{t}$ to the grid $D$, because if the rainmaker chooses $%
p_{t}$ to be uniform on $[0,1]$ then each one of the $N$ forecasts $d$ in $D$
is used about $T/N$ times, and so in order to get an error of $1/N$ one
needs $T/N$ to be of the order of $N^{2}$.

\emph{(c)} A tighter estimation in the proof of Proposition \ref{p:minmax}
uses $\mathbb{E}\left[ Z_{t}^{2}|h_{t-1}\right] =p_{t}(1-p_{t})$, which is
close to $d(1-d),$ instead of $\mathbb{E}\left[ Z_{t}^{2}|h_{t-1}\right]
\leq 1/4$ (recall that $\mathbb{E}\left[ Z_{t}^{2}|h_{t-1}\right] $ is the
variance of a Bernoulli$(p_{t})$ random variable); this yields $\mathbb{E}%
\left[ K_{T}\right] \leq 1/N$ for $T\ $starting approximately at $(2/3)N^{3}$%
. More precisely: let $f(d):=d^{\prime }(1-d^{\prime })$ where $d^{\prime
}=d+1/(2N)$ for $d<1/2,$ $d^{\prime }=d$ for $d=1/2,$ and $d^{\prime
}=d-1/(2N)$ for $d>1/2;$ then $|p_{t}-d|\leq 1/(2N)$ implies $%
p_{t}(1-p_{t})\leq f(d)$ (because $x(1-x)$ increases for $x<1/2$ and
decreases for $x>1/2$), and then the coefficient $1/4$ in inequality (\ref%
{eq:l2})\ may be replaced with $f(d)$. This yields%
\begin{eqnarray*}
\mathbb{E}\left[ \widetilde{K}_{T}\right] &\leq &\frac{1}{T}\sum_{d\in
D}\left( f(d)\mathbb{E}\left[ n(d)\right] \right) ^{1/2}\leq \frac{1}{T}%
\left( \sum_{d\in D}f(d)\right) ^{1/2}\left( \sum_{d\in D}\mathbb{E}\left[
n(d)\right] \right) ^{1/2} \\
&=&\frac{1}{T^{1/2}}\left( \sum_{d\in D}f(d)\right) ^{1/2}.
\end{eqnarray*}%
Now it is a straightforward computation to see that $\sum_{d\in
D}f(d)=N/6+1/4-1/(6N)$, and so for all $T\geq (2/3)N^{3}+N^{2}-(2/3)N$ we
have $\mathbb{E}\left[ \widetilde{K}_{T}\right] \leq 1/(2N)$, and thus $%
\mathbb{E}\left[ K_{T}\right] \leq 1/N$.

\emph{(d)} A looser but slightly simpler estimation in the proof of
Proposition \ref{p:minmax} that uses $n(d)\leq T$ for each $d$ instead of $%
\sum_{d}n(d)=T$ yields $\mathbb{E}\left[ \widetilde{K}_{T}\right] \leq
(1/T)(1/2)NT^{1/2}$, and so $\mathbb{E}\left[ K_{T}\right] \leq 1/N$ for $%
T\geq N^{4}$.

\emph{(e)} If instead of $D$ we were to use the standard $1/N$%
-grid\linebreak\ $D^{\prime }=\{0,1/N,2/N,...,1\}$ we would need to replace $%
N$ (the size of $D$) with $N+1$ (the size of $D^{\prime }$) in (\ref{eq:N}),
which would yield $\mathbb{E}\left[ K_{T}\right] \leq 1/N$ for $T\geq
(N+1)N^{2}=N^{3}+N^{2}$.

\emph{(f)} A \emph{lower} bound on the guaranteed calibration error as a
function of the number of periods $T$ has recently been obtained by Qiao and
Valiant (2021); it is of the order of $T^{-0.472}$ (improving on the trivial
lower bound of the order of $T^{-1/2}$, which is obtained when the rain is
an i.i.d. Bernoulli$(1/2)$ process; note that what we have shown here is an
upper bound of $T^{-1/3}$).

\emph{(g)} The minimax approach can be further used to obtain calibrated
forecasts that are \textquotedblleft calibeating," a concept introduced by
Foster and Hart (2022): they are guaranteed to beat the Brier score of any
other forecast by that forecast's calibration score. See Appendix A.2 of the 
\texttt{arxiv} version of Foster and Hart (2022).

\emph{(h)} The minimax proof does not construct a calibrated procedure; it
only shows its existence. There are various specific such constructions in
the literature, the simplest being the one in Section V of Foster and Hart
(2021).

\newpage

\end{document}